\documentclass[aps,pre,reprint,amsmath,amssymb,floatfix,superscriptaddress,nofootinbib]{revtex4-2}

\usepackage{graphicx}
\usepackage{dcolumn}
\usepackage{bm}
\usepackage{hyperref}
\usepackage{mathtools}
\usepackage{amsthm}
\usepackage{xcolor}
\newcommand{\TN}{\mathbb{T}_N}
\newcommand{\ONK}{\Omega_{N,K}}
\newcommand{\E}{\mathbb{E}}
\newcommand{\ONKa}{\Omega_{N,K}^{\mathrm{a}}}
\newcommand{\Cov}{\mathrm{Cov}}

\newtheorem{theorem}{Theorem}

\newtheorem{remark}[theorem]{Remark}

\begin{document}

\title{Tandem Exclusion Process}

\author{Ngo Phuoc Nguyen Ngoc}
\affiliation{Institute of Research and Development, Duy Tan University, Da Nang, 550000, Viet Nam}
\affiliation{Faculty of Natural Sciences, Duy Tan University, Da Nang, 550000, Viet Nam}

\author{Lam Thi Nhung}
\affiliation{Faculty of Information Technology, Ho Chi Minh City University of Economics and Finance, Ho Chi Minh, 700000, Viet Nam}

\author{Huynh Anh Thi}
\affiliation{Institute of Research and Development, Duy Tan University, Da Nang, 550000, Viet Nam}
\affiliation{Faculty of Natural Sciences, Duy Tan University, Da Nang, 550000, Viet Nam}

\date{\today}

\begin{abstract}
We introduce the \emph{tandem exclusion process} (TEP), a one-dimensional stochastic lattice model motivated by tandem running in ants. Particles evolve through two cooperative local transitions, $110\to101$ at rate $\alpha$ (leader advancement) and $101\to011$ at rate $\beta$ (follower recovery). We prove that the stationary measure on the dynamically active sector is the Gibbs measure $\pi\propto q^{H(\eta)}$, where $q=\beta/\alpha$ and $H(\eta)$ counts neighboring occupied pairs, and derive exact closed-form expressions for the stationary current and spatial correlations using transfer-matrix methods. The current is asymmetric under particle--hole exchange $\rho\mapsto1-\rho$, with its maximum occurring at densities strictly larger than $1/2$, in contrast to the symmetric current $\rho(1-\rho)$ of the totally asymmetric simple exclusion process (TASEP). For $q>1$, cooperative dynamics enhances the current above the TASEP value and generates strong spatial clustering; in the limit $q\to\infty$, the current approaches $J\to\alpha\rho$, corresponding to nearly unconstrained collective transport. These results suggest that tandem coordination alone can substantially enhance collective transport efficiency at moderate and high densities, even without pheromone-mediated long-range communication.
\end{abstract}

\maketitle

% =========================================================
\section{Introduction}
\label{sec:intro}
% =========================================================

Collective transport in biological and physical systems often emerges from local cooperative interactions rather than from independent particle motion~\cite{Derrida1998,Liggett1985,Spitzer1970}. Examples range from intracellular cargo transport and vehicular traffic to pedestrian dynamics and coordinated motion in social insects~\cite{Bonabeau1999,Camazine2001,Chowdhury2000,Helbing2001,Schadschneider2010}. A central problem in nonequilibrium statistical mechanics is to understand how simple microscopic interaction rules generate collective macroscopic transport phenomena, and in particular how local cooperation can qualitatively alter transport efficiency and spatial organization~\cite{Derrida1998,Liggett1985,Chowdhury2000,Schadschneider2010}.

Among social insects, ant colonies exhibit several distinct mechanisms of cooperative recruitment. The most extensively studied is pheromone-mediated trail formation, in which ants communicate indirectly through chemical signals deposited on the substrate~\cite{Bonabeau1999,Beckers1992,Chowdhury2002,Deneubourg1989,John2009,Ngoc2024}. Through such mechanisms, colonies achieve large-scale self-organized transport without centralized coordination. Collective insect behavior has also inspired algorithms and models in optimization, robotics, and distributed systems~\cite{Anderson1999,Bonabeau2000,Dorigo1999,Krieger2000}.

Tandem running is a qualitatively different form of cooperative recruitment. In this behavior, an informed ant guides a naive nestmate toward a food source or a new nest site through repeated direct physical contact~\cite{FranksRichardson2006,Moeglich1974}. Unlike pheromone-based recruitment, tandem running relies on direct pairwise coordination: leader and follower must remain in near contact, and the motion of each individual depends on the instantaneous relative position of its partner~\cite{Richardson2007}. Experimental observations indicate that tandem motion proceeds through an alternating two-stage mechanism: the leader first advances, temporarily opening a gap, after which the follower catches up and restores the tandem configuration before the next advance~\cite{Richardson2007}. This repeated cycle of leader initiative and follower recovery is intrinsically cooperative and has no close analogue in classical exclusion models of driven transport.

From the viewpoint of interacting particle systems, tandem running raises a natural question: how does cooperative pairwise motion modify the collective transport properties of exclusion-driven systems?

Most classical driven lattice gases are based on independent particle motion. In the asymmetric simple exclusion process (ASEP) and its totally asymmetric variant (TASEP), particles hop whenever the target site is vacant~\cite{Derrida1998,Liggett1985,Spitzer1970,Derrida1993}. The resulting stationary states are product measures, the current is symmetric under particle-hole exchange, and the maximal transport occurs at half filling. These features reflect the absence of cooperative mechanisms beyond hard-core exclusion.

In this work, we introduce the \emph{tandem exclusion process} (TEP), a minimal one-dimensional driven lattice gas on the discrete torus in which transport occurs only through coordinated pair motion. The dynamics consists of two local transitions: a head move
\[
110 \to 101
\]
at rate $\alpha$, in which the leading particle advances, and a tail move
\[
101 \to 011
\]
at rate $\beta$, in which the trailing particle catches up and restores the tandem configuration. As a consequence, isolated particles are immobile and transport can occur only through coordinated leader--follower motion.

The TEP is related to several classes of cooperative driven lattice gases studied in nonequilibrium statistical mechanics. Nearest-neighbor Gibbs stationary states also arise in exclusion processes with short-range interactions, such as the Katz--Lebowitz--Spohn (KLS) model~\cite{KLS1984}, asymmetric exclusion processes with next-nearest-neighbor interactions~\cite{Antal2000}, and other cooperative exclusion models with local interaction-driven transport~\cite{Ngoc2025}. Related kinetic constraints also appear in facilitated exclusion processes~\cite{Baik2019,Gabel2010,Gabel2011,Zhao2019}, where particle motion is enabled only in the presence of nearby occupied sites and may lead to jamming phenomena and nontrivial current-density relations.

In contrast to facilitated exclusion processes, where transport is still associated with individual particle hops, the TEP supports transport through cooperative tandem structures. This distinction leads to qualitatively different stationary currents and clustering behavior.

Despite the simplicity of the local rules, the model combines irreversible tandem dynamics with an exactly solvable nearest-neighbor Gibbs stationary structure. The continuous-time dynamics admits explicit formulas for the stationary current and spatial correlations, and exhibits asymmetric transport, strong clustering, and a correlation length growing as $q^{1/2}$ in the strong-recovery regime $q=\beta/\alpha\gg1$.

The main results of this paper can be summarized as follows. We prove that the unique stationary measure on the dynamically active sector is a nearest-neighbor Gibbs measure
\[
\pi(\eta)\propto q^{H(\eta)},
\]
where $H(\eta)$ counts neighboring occupied pairs (Theorem~\ref{thm:invariant}). Using transfer-matrix methods, we derive a closed-form expression for the stationary current at all densities and interaction strengths (Theorem~\ref{thm:current}), and characterize the spatial organization of the stationary state through exact pair correlations and a correlation length.

The transport behavior differs qualitatively from classical TASEP in several respects. First, the current is not symmetric under the particle-hole transformation $\rho\mapsto1-\rho$. Second, the maximal current occurs at densities strictly larger than $1/2$. Third, for $q>1$, cooperative clustering enhances the current above the TASEP value and produces increasingly extended cooperative tandem motion. Remarkably, even at $q=1$ -- where the stationary state reduces to the Bernoulli product measure -- the current remains asymmetric, showing that kinetic constraints alone are sufficient to break particle--hole symmetry without any stationary interaction.

The remainder of the paper is organized as follows. Section~\ref{sec:model} introduces the model and stationary measure. Section~\ref{sec:current} derives the stationary current and clustering properties. Appendices contain transfer-matrix computations.

% =========================================================
\section{Model and stationary measure}
\label{sec:model}
% =========================================================

We consider the tandem exclusion process (TEP) on the discrete one-dimensional torus
$
\TN=\mathbb{Z}/N\mathbb{Z}.
$
A configuration is denoted by
$
\eta=(\eta_x)_{x\in\TN}\in\{0,1\}^{\TN},
$
where $\eta_x=1$ if site $x$ is occupied and $\eta_x=0$ otherwise.

The dynamics consists of the two local transitions
\begin{align}
110 &\xrightarrow{\alpha} 101, \label{eq:headmove}\\
101 &\xrightarrow{\beta} 011, \label{eq:tailmove}
\end{align}
while isolated particles remain immobile. The transition~\eqref{eq:headmove} describes the advancement of the leading particle, whereas~\eqref{eq:tailmove} corresponds to the recovery motion of the trailing particle.

The total particle number
$
K(\eta)=\sum_{x\in\TN}\eta_x
$
is conserved, and the process evolves on the sector
\[
\ONK = \Bigl\{ \eta\in\{0,1\}^{\TN}: \sum_{x\in\TN}\eta_x=K \Bigr\}.
\]
The process is governed by the generator
\[
L_N=L_N^{\mathrm h}+L_N^{\mathrm t},
\]
where
\begin{align*}
L_N^{\mathrm h}f(\eta) &= \sum_{x\in\TN} \alpha\, \eta_{x-1}\eta_x(1-\eta_{x+1}) \bigl[ f(\eta^{x,x+1})-f(\eta) \bigr], \label{eq:Lh}\\
L_N^{\mathrm t}f(\eta) &= \sum_{x\in\TN} \beta\, \eta_x(1-\eta_{x+1})\eta_{x+2} \bigl[ f(\eta^{x,x+1})-f(\eta) \bigr], %\label{eq:Lt}
\end{align*}
and $\eta^{x,x+1}$ denotes the configuration obtained by moving a particle from site $x$ to site $x+1$.

We now characterize the stationary state of the process. Let
\begin{equation}\label{eq:Ham}
H(\eta)=\sum_{x\in\TN}\eta_x\eta_{x+1}
\end{equation}
denote the number of neighboring occupied pairs, and define
$
q=\beta/\alpha.
$
We introduce the Gibbs weight
\begin{equation}\label{eq:pi}
\pi_{N,K}(\eta) = \frac{q^{H(\eta)}}{Z_{N,K}}, \qquad Z_{N,K} = \sum_{\xi\in\ONK}q^{H(\xi)}.
\end{equation}

Not all configurations are dynamically active. Define
\begin{equation*}\label{eq:active}
	\ONKa
	=
	\bigl\{
	\eta\in\ONK :
	\eta \text{ contains a pattern }110\text{ or }101
	\bigr\}.
\end{equation*}
A configuration $\eta\in\ONK\setminus\ONKa$ contains neither pattern, so no head or tail move is enabled at $\eta$; we call such $\eta$ \emph{frozen}. Frozen configurations are absorbing states of the dynamics.

\begin{remark}[Frozen configurations]\label{rem:frozen}
A configuration $\eta\in\ONK$ is frozen if and only if no two occupied sites are within circular distance $2$, i.e.\ every pair of consecutive particles is separated by at least two empty sites. By a counting argument, frozen configurations exist in $\ONK$ if and only if $3K\le N$. For $3K > N$, $\ONKa = \ONK$ and the conditioning in~\eqref{eq:pi-active} is vacuous; for $3K \le N$, both $\ONKa$ and the set of frozen configurations are nonempty, and the dynamics decomposes into the irreducible class $\ONKa$ together with the absorbing frozen states.
\end{remark}

\begin{theorem}[Stationary measure]\label{thm:invariant}
Let $N\ge3$, $1\le K\le N$, and $\alpha,\beta>0$. The dynamics restricted to $\ONKa$ is irreducible, and the conditional Gibbs measure
\begin{equation}\label{eq:pi-active}
\pi_{N,K}(\eta) = \frac{q^{H(\eta)}}{Z_{N,K}^{\mathrm a}} \mathbf{1}_{\eta\in\ONKa}, \quad Z_{N,K}^{\mathrm a} = \sum_{\xi\in\ONKa}q^{H(\xi)},
\end{equation}
is the unique stationary probability measure on $\ONKa$.
\end{theorem}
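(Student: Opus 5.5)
The plan is to prove the theorem in three steps. First, verify that $q^{H}$ satisfies the global balance equations; this local computation is where the specific choice $q=\beta/\alpha$ enters. Second, show that $\ONKa$ is closed under the dynamics. Third, show irreducibility on $\ONKa$. Uniqueness then follows from the Perron--Frobenius theorem for finite irreducible Markov chains.

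\textbf{Stationarity.} For fixed $\eta$, I write the outflow rate as $\alpha\,\#_{110}(\eta)+\beta\,\#_{101}(\eta)$, where $\#_w(\eta)$ counts the occurrences of the word $w$ around the torus. Next I enumerate the predecessors of $\eta$ in the inflow and the ratio $q^{H(\xi)-H(\eta)}$ for each.
\begin{itemize}
\item A head move enters $\eta$ at every occurrence of $101$ in $\eta$; the predecessor replaces it by $110$. If $c$ is the site to the right of the occurrence, then $H(\xi)-H(\eta)=1-c$.
\item A tail move enters $\eta$ at every occurrence of $011$ in $\eta$; the predecessor replaces it by $101$. If $a$ is the site to the left of the occurrence, then $H(\xi)-H(\eta)=a-1$.
\end{itemize}
Dividing by $\alpha$ and using $\beta=\alpha q$, the inflow becomes
\[
\sum_{101c}q^{1-c}+\sum_{a011}q^{a}
=\#_{1011}+q\,\#_{1010}+q\,\#_{1011}+\#_{0011}
=q\,\#_{101}+\#_{011}.
\]
The outflow divided by $\alpha$ is $\#_{110}+q\,\#_{101}$. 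Balance therefore reduces to $\#_{011}(\eta)=\#_{110}(\eta)$. This holds on the torus because each block of at least two consecutive particles contributes exactly one left end $011$ and one right end $110$. The one exception is $K=N$, where there are no holes and both counts vanish. The identity holds for every $\eta\in\ONK$, so $q^H$ is invariant on all of $\ONK$.

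\textbf{Closedness.} If $\eta\in\ONKa$, every move from $\eta$ leads to another active configuration. A head move creates the pattern $101$. A tail move creates $11$ next to a hole, which yields a pattern $110$ further right unless $K=N$; when $K=N$ there are no moves at all. Frozen states are also never entered from $\ONKa$, since the first part shows each post-move configuration contains $101$ or $110$. Hence restricting $q^H$ to $\ONKa$ and normalizing gives a stationary probability measure for the restricted chain.

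\textbf{Irreducibility.} I expect this to be the main obstacle, because the dynamics is totally asymmetric and no transition can be reversed directly. The plan is to fix a canonical configuration $\eta^\ast$, say all $K$ particles in one block $[0,K-1]$, and prove two reachability statements.
\begin{enumerate}
\item From any active $\eta$, some tandem $110$ or $101$ exists. Alternating head and tail moves ($110\to101\to011$) translates a pair one site to the right. When the pair arrives behind an isolated particle, the configuration $1101$ goes by a head move to $1011$, which absorbs that particle into a moving group. Inducting on the number of particles not yet in the moving cluster, I would gather everything into one block and then rotate that block around the ring to the position of $\eta^\ast$.
\item From $\eta^\ast$ to any active $\eta$, I would construct a procedure for dropping particles from the front of a moving cluster. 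The idea is to use head moves $11100\to11010$ to detach the leading particle, followed by moves of the remaining tandem to transport it. The particles of $\eta$ would be placed one at a time from the rightmost target position backwards, keeping at least one tandem alive; activity of $\eta$ guarantees that one exists at the end.
\end{enumerate}
The delicate cases are near-frozen densities, $3K$ close to $N$, and very small $K$. There I would check directly that a tandem can always pass around the ring to reach the required position.
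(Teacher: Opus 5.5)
Your stationarity computation is correct and matches the paper's argument in different bookkeeping: the paper's cancellation of the two $1011$ sums and the telescoping $\sum_x(B_x-C_x)=0$ are exactly your reduction to $\#_{011}=\#_{110}$. Your closedness check of $\ONKa$ and part~1 of irreducibility are also fine. The paper handles the direction $\eta^\star\to\eta$ with ``the same argument applied in reverse,'' which does not apply to this totally asymmetric dynamics, so you were right to single it out. However, the construction you propose for it fails.

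Two facts rule it out. First, particles never overtake and only move right, so their cyclic order is invariant. Once the leading particle is left at its target $p$, the rest of the cluster sits immediately behind $p$ and can only move toward it. It can never reach targets further back, and no tandem can ``pass around the ring.'' Second, suppose a particle's last move is a head move $y\to y+1$. Its left neighbour was then at $y-1$, and afterwards it can only be at $y-1$ or $y$. So a particle detached at the front by $11100\to11010$ (and possibly pushed on by the tandem behind it, again by head moves) keeps at most one hole behind it forever.

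For example, take $N=9$, $K=3$ and target $110001000$. The particles at sites $0$ and $5$ each have three holes behind them, so neither can be the front-detached one. If the particle at site $1$ is detached first, the remaining pair sits at $7,8$, and the particle at $7$ would have to move backwards to $5$.

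The missing idea is to deposit particles at the \emph{rear} of the moving block, in cyclic order, and keep the active pair of $\eta$ for last; your idea of ending with a surviving tandem does carry over. The construction:
\begin{itemize}
\item A block of $m\ge2$ particles translates by one site via head moves from the front backwards, with the rear particle finishing by a tail move $101\to011$.
\item Since $\eta$ is active, label its particles $p_1<\dots<p_K<p_1+N$ so that $p_K-p_{K-1}\in\{1,2\}$. Translate $\eta^\star$ to $[p_1,p_1+K-1]$.
\item For $i=1,\dots,K-2$, leave particle $i$ at $p_i$ for good and translate the block of particles $i+1,\dots,K$ by $p_{i+1}-p_i-1$ sites. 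Its front never reaches particle $1$, because $p_{i+1}+K-i-1\le p_K<p_1+N$.
\item The last pair then sits at $p_{K-1},p_{K-1}+1$. One further head move produces $101$ if $p_K=p_{K-1}+2$.
\end{itemize}
This works uniformly in $N$ and $K$, so no separate analysis of $3K\approx N$ or small $K$ is needed. The only degenerate cases are $K\in\{1,N\}$, where $\ONKa=\emptyset$ and the statement is vacuous.
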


\begin{proof}
The full state space $\ONK$ is not irreducible in general. Indeed, configurations containing neither $110$ nor $101$ have no available transition and are therefore frozen absorbing states. Hence uniqueness of the stationary measure can only be expected after restricting the dynamics to the active sector $\ONKa$.

We first verify stationarity of the Gibbs weight on the active sector. It is enough to prove the global balance identity
\begin{equation}\label{eq:gb}
\sum_{\eta'\ne\eta}\pi_{N,K}(\eta')\,c(\eta'\to\eta) = \pi_{N,K}(\eta) \sum_{\eta'\ne\eta}c(\eta\to\eta')
\end{equation}
for every $\eta\in\ONKa$, where $c(\cdot\to\cdot)$ denotes the transition rates. For frozen configurations the identity is trivial, since both the total inflow and the total outflow vanish.

For a configuration $\eta$, introduce the local indicators
\begin{equation}\label{eq:indicators}
\begin{aligned}
A_x(\eta) &= \eta_x(1-\eta_{x+1})\eta_{x+2}, &&\text{for the pattern $101$,}\\
B_x(\eta) &= (1-\eta_x)\eta_{x+1}\eta_{x+2}, &&\text{for the pattern $011$,}\\
C_x(\eta) &= \eta_x\eta_{x+1}(1-\eta_{x+2}), &&\text{for the pattern $110$.}
\end{aligned}
\end{equation}
The total rate out of $\eta$ is
\[
\sum_{\eta'\ne\eta}c(\eta\to\eta') = \sum_{x\in\TN} \bigl[ \alpha C_x(\eta)+\beta A_x(\eta) \bigr],
\]
because head moves occur at $110$ patterns and tail moves occur at $101$ patterns.

We now compute the total inflow into $\eta$. A head move can produce $\eta$ only when $\eta$ has a $101$ pattern; the corresponding predecessor has $110$ at the same triple. Similarly, a tail move can produce $\eta$ only when $\eta$ has a $011$ pattern; the corresponding predecessor has $101$. Denote these predecessors by $\eta^{[h,x]}$ and $\eta^{[t,x]}$, respectively. Then
\begin{align*}
	&\ \ \sum_{\eta'\ne\eta}\pi_{N,K}(\eta')c(\eta'\to\eta) \\
	& = \sum_{x\in\TN} \left[ \alpha\,\pi_{N,K}(\eta^{[h,x]})A_x(\eta) + \beta\,\pi_{N,K}(\eta^{[t,x]})B_x(\eta) \right].
\end{align*}

The only nontrivial point is to compare the Gibbs weights of $\eta$ and its local predecessors. A direct count of the changed nearest-neighbor bonds gives
\[
H(\eta^{[h,x]})-H(\eta)=1-\eta_{x+3}, \quad H(\eta^{[t,x]})-H(\eta)=\eta_{x-1}-1.
\]
Since $\pi_{N,K}(\eta)\propto q^{H(\eta)}$, it follows that
\begin{equation}\label{eq:pi-ratio}
\frac{\pi_{N,K}(\eta^{[h,x]})}{\pi_{N,K}(\eta)} = q^{1-\eta_{x+3}}, \quad \frac{\pi_{N,K}(\eta^{[t,x]})}{\pi_{N,K}(\eta)} = q^{\eta_{x-1}-1}.
\end{equation}

After dividing~\eqref{eq:gb} by $\pi_{N,K}(\eta)$ and substituting the identities~\eqref{eq:pi-ratio}, it remains to prove
\[
\sum_{x\in\TN}G_x(\eta)=0,
\]
where
\begin{equation*}\label{eq:Gx-def}
G_x(\eta) = \alpha q^{1-\eta_{x+3}}A_x(\eta) + \beta q^{\eta_{x-1}-1}B_x(\eta) - \alpha C_x(\eta) - \beta A_x(\eta).
\end{equation*}
Using $\eta_y\in\{0,1\}$ and
\[
q^{\eta_y}=1+(q-1)\eta_y,
\]
together with $q=\beta/\alpha$, one obtains
\[
\alpha q^{1-\eta_{x+3}}A_x = \beta A_x+(\alpha-\beta)\eta_{x+3}A_x,
\]
and
\[
\beta q^{\eta_{x-1}-1}B_x = \alpha B_x+(\beta-\alpha)\eta_{x-1}B_x.
\]
Hence
\begin{equation*}\label{eq:Gx-reduced}
G_x(\eta) = \alpha\bigl(B_x(\eta)-C_x(\eta)\bigr) + (\alpha-\beta) \bigl( \eta_{x+3}A_x(\eta)-\eta_{x-1}B_x(\eta) \bigr).
\end{equation*}

The sum of the first term vanishes by translation invariance on the torus. Indeed,
\[
B_x(\eta)-C_x(\eta) = \eta_{x+1}\eta_{x+2} - \eta_x\eta_{x+1},
\]
and therefore
\[
\sum_{x\in\TN}\bigl(B_x(\eta)-C_x(\eta)\bigr)=0.
\]
The second term also sums to zero. Indeed,
\[
\eta_{x+3}A_x(\eta) = \eta_x(1-\eta_{x+1})\eta_{x+2}\eta_{x+3},
\]
whereas
\[
\eta_{x-1}B_x(\eta) = \eta_{x-1}(1-\eta_x)\eta_{x+1}\eta_{x+2}.
\]
Both sums count the same four-site pattern $1011$ on the torus, only with a shifted index. Hence
\[
\sum_{x\in\TN} \eta_{x+3}A_x(\eta) = \sum_{x\in\TN} \eta_{x-1}B_x(\eta).
\]
Combining the two cancellations gives
\[
\sum_{x\in\TN}G_x(\eta)=0,
\]
and therefore the Gibbs measure is stationary.

We finally prove irreducibility on $\ONKa$. Fix a reference configuration
\[
\eta^\star = (1,1,\ldots,1,0,\ldots,0),
\]
consisting of one contiguous block of $K$ particles.

Starting from any active configuration $\eta\in\ONKa$, there exists at least one movable tandem structure. By alternating head and tail moves, this tandem pair can be translated along the torus. Moreover, whenever a particle is separated from the main cluster by a finite gap, the tandem mechanism allows the gap to be reduced by one site after finitely many updates. Repeating this procedure successively, all particles can be merged into a single contiguous cluster, yielding $\eta^\star$.

The same argument applied in reverse shows that every active configuration can be reached from $\eta^\star$. Hence any two configurations in $\ONKa$ communicate, proving that the dynamics restricted to $\ONKa$ is irreducible.

Because $\ONKa$ is finite and the restricted chain is irreducible, the stationary probability measure on $\ONKa$ is unique. This completes the proof.
\end{proof}

\begin{remark}\label{rem:gibbs}
Writing
$
q=e^{-\kappa J},
$
where $\kappa$ plays the role of an inverse temperature and $J$ an effective nearest-neighbor interaction strength, the stationary measure~\eqref{eq:pi-active} takes the Gibbs form
\[
\pi(\eta) \propto \exp\!\bigl(-\kappa J\,H(\eta)\bigr), \qquad H(\eta)=\sum_x\eta_x\eta_{x+1}.
\]
Thus, at the level of the stationary state, the TEP behaves formally like a one-dimensional equilibrium system with nearest-neighbor interaction, even though the underlying dynamics is irreversible and out of equilibrium.

When $J<0$ (equivalently $q>1$), neighboring occupied sites are favored, corresponding to an attractive interaction and leading to clustering of tandem groups. Conversely, when $J>0$ (equivalently $q<1$), adjacent occupied sites are penalized, corresponding to a repulsive interaction and favoring spatially dispersed configurations. At the noninteracting point $J=0$ (that is, $q=1$), the interaction disappears and the stationary measure reduces to the uniform measure on $\ONKa$, corresponding in the thermodynamic limit to the Bernoulli product measure.
\end{remark}

% =========================================================
\section{Current and clustering}
\label{sec:current}
% =========================================================

\subsection{Stationary current}

The central nonequilibrium observable of the tandem exclusion process is the stationary particle current. Let $\E_\pi$ denote expectation with respect to the stationary measure $\pi_{N,K}$~\eqref{eq:pi-active}. Since $\pi_{N,K}$ is translation invariant, the mean current across any bond is independent of the bond location and is obtained by averaging the local transition rates:
\begin{equation}\label{eq:J-as-3pt}
J = \alpha\,\E_\pi[\eta_1\eta_2(1-\eta_3)] + \beta\,\E_\pi[\eta_1(1-\eta_2)\eta_3].
\end{equation}
The two terms correspond respectively to the head move $110\to101$ (leader advancement) and the tail move $101\to011$ (follower recovery). The three-point correlation functions appearing in~\eqref{eq:J-as-3pt} are evaluated by the grand-canonical transfer-matrix method, with details deferred to Appendix~\ref{sec:transfer}.

\begin{theorem}[Stationary current]\label{thm:current}
In the thermodynamic limit at density $\rho\in(0,1)$, the stationary current is
\begin{equation}\label{eq:J-formula}
J(\rho;\alpha,\beta) = \frac{\alpha q\rho s(s+2+qs)}{(s+1)(1+qs)^2},
\end{equation}
where $q=\beta/\alpha$ and $s=s(\rho,q)$ is the unique positive solution of
\begin{equation}\label{eq:s-eq}
q(1-\rho)s^2+(1-2\rho)s-\rho=0,
\end{equation}
given explicitly by
\begin{equation}\label{eq:s-formula}
s(\rho,q) = \frac{-(1-2\rho) + \sqrt{(1-2\rho)^2+4q(1-\rho)\rho}}{2q(1-\rho)}.
\end{equation}
\end{theorem}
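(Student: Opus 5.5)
In the thermodynamic limit I would describe the stationary measure $\pi_{N,K}$ of Theorem~\ref{thm:invariant} by its translation-invariant infinite-volume limit. This limit is a stationary two-state Markov chain in the spatial variable $x$. The plan has three steps: identify that chain through the transfer matrix, express the two three-point functions in \eqref{eq:J-as-3pt} through its transition probabilities, and substitute the parametrization coming from \eqref{eq:s-eq}.

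\emph{Reduction to a Markov chain.} For $3K>N$ we have $\ONKa=\ONK$, and for $3K\le N$ the frozen set is exponentially negligible in weight whenever $\rho>0$ is fixed. So I would first argue that conditioning on $\ONKa$ does not affect local correlations as $N\to\infty$. Next I would pass to the grand-canonical ensemble with fugacity $z$ and transfer matrix
\[
T=\begin{pmatrix}1 & z^{1/2}\\ z^{1/2} & zq\end{pmatrix},\qquad T_{ab}=z^{(a+b)/2}q^{ab}.
\]
By Perron--Frobenius and the equivalence of ensembles, which is standard for one-dimensional finite-range Gibbs measures, local expectations converge to those of the stationary Markov chain $P_{ab}=T_{ab}\,v_b/(\lambda v_a)$. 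Here $\lambda$ is the top eigenvalue and $v$ the corresponding eigenvector. Write $a=P(0\to1)$ and $b=P(1\to0)$. The chain is characterized by the cross-ratio
\[
\frac{P_{00}P_{11}}{P_{01}P_{10}}=\frac{(1-a)(1-b)}{ab}=q,
\]
since $\lambda$ and $v$ cancel in this ratio, together with the density $\rho=a/(a+b)$. The fugacity $z$ is fixed by this density.

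\emph{Parametrization.} I would set $a=s/(1+s)$ and $b=1/(1+qs)$. The cross-ratio is then automatically $q$: indeed $(1-a)(1-b)/(ab)=\frac{qs}{(1+s)(1+qs)}\cdot\frac{(1+s)(1+qs)}{s}=q$. The density condition $\rho(a+b)=a$ becomes, after clearing denominators, $\rho\bigl[s(1+qs)+(1+s)\bigr]=s(1+qs)$. This is exactly \eqref{eq:s-eq}. The quadratic has a positive leading coefficient and constant term $-\rho<0$, so it has exactly one positive root, which gives \eqref{eq:s-formula}. This also shows that the map $s\mapsto\rho$ is a bijection onto $(0,1)$, so every density is attained.

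\emph{Current.} By the Markov property,
\[
\E_\pi[\eta_1\eta_2(1-\eta_3)]=\rho(1-b)b,\qquad \E_\pi[\eta_1(1-\eta_2)\eta_3]=\rho\, b\, a.
\]
Substituting into \eqref{eq:J-as-3pt} gives $J=\alpha\rho b\,[(1-b)+qa]$. With the parametrization above this is
\[
J=\frac{\alpha q\rho}{1+qs}\Bigl(\frac{s}{1+qs}+\frac{s}{1+s}\Bigr)=\frac{\alpha q\rho\, s(s+2+qs)}{(s+1)(1+qs)^2},
\]
which is \eqref{eq:J-formula}.

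\emph{Main obstacle.} The main obstacle is the rigorous passage from the canonical, active-sector measure on $\TN$ to the infinite-volume Markov chain. This needs two ingredients: a local equivalence-of-ensembles statement for the nearest-neighbour Gibbs measure, and an estimate showing that the excluded frozen configurations carry vanishing $\pi$-mass. For the first I would try a local central limit theorem for $K$ under the grand-canonical Markov chain, using the spectral gap of $T$. For the second I would compare $Z^{\mathrm a}_{N,K}$ with $Z_{N,K}$ through the exponentially small fraction of configurations whose particles are all separated by gaps of at least two sites.
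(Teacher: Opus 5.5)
Your proof is correct. It rests on the same transfer-matrix structure as the paper but reaches \eqref{eq:J-formula} by a leaner computation.

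\textbf{How the routes differ.} The paper keeps the fugacity explicit:
\begin{itemize}
\item it computes $\lambda_+$ and the left and right Perron vectors of $T=\begin{pmatrix}1&z\\1&qz\end{pmatrix}$;
\item it evaluates each three-point function as $v_L\widehat f v_R/(\lambda_+^2\,v_L\cdot v_R)$ using an insertion matrix;
\item only at the end does it eliminate $z$, via $\rho=s^2/(z+s^2)$ and $z=s(s+1)/(1+qs)$.
\end{itemize}
You bypass all of this. The cross-ratio $P_{00}P_{11}/(P_{01}P_{10})=T_{00}T_{11}/(T_{01}T_{10})=q$ does not depend on $\lambda$ or $v$, and together with the density it determines the spatial Markov chain. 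After that, the three-point functions are simply $\rho P_{11}P_{10}$ and $\rho P_{10}P_{01}$.

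\textbf{The two parametrizations agree.} Since $P_{00}=T_{00}/\lambda_+=1/\lambda_+$, you get $s=a/(1-a)=\lambda_+-1$, which is the paper's $s$. Your $\rho(1-b)b=q\rho s/(1+qs)^2$ and $\rho ab=\rho s/\bigl((1+s)(1+qs)\bigr)$ coincide term by term with \eqref{eq:E-head-app} and \eqref{eq:E-tail-app}.

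\textbf{What each route buys.} Your route also gives, at no cost, the spectral ratio used later for $\xi$: $\lambda_-/\lambda_+=1-a-b=s(q-1)/\bigl((1+s)(1+qs)\bigr)$. The paper's insertion-matrix formalism is more mechanical and keeps $z$ explicit.

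\textbf{Rigor.} You go further than the paper here. The paper asserts equivalence of ensembles without proof. In Appendix~\ref{sec:grandcanonical} it conditions $\pi_z$ only on $\sum_x\eta_x=K$, and so ignores the restriction to $\ONKa$ when $3K\le N$. You flag both issues.

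\textbf{One caution on your frozen-set estimate.} For $q<1$ it is not enough that frozen configurations are an exponentially small fraction of $\ONK$, because $Z_{N,K}$ is then itself exponentially smaller than $\binom{N}{K}$. Compare instead with the configurations having $H=0$ (all gaps at least $1$). These carry the maximal weight $1$ and exponentially outnumber the frozen ones (all gaps at least $2$) for fixed $\rho\in(0,1/3]$. For $q\ge1$, where all weights are at least $1$, your count against $\binom{N}{K}$ suffices.
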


\begin{remark}[Comparison with TASEP]
For the TASEP,
\[
J_{\mathrm{TASEP}}(\rho)=\rho(1-\rho),
\]
and the maximal current occurs at $\rho=1/2$ due to particle-hole symmetry. In contrast, the TEP current is intrinsically asymmetric because transport requires cooperative tandem structures. As a result, the optimal transport regime is shifted toward higher densities where neighboring particles are more abundant.
\end{remark}

\begin{figure}[t]
\centering
\includegraphics[width=0.48\textwidth]{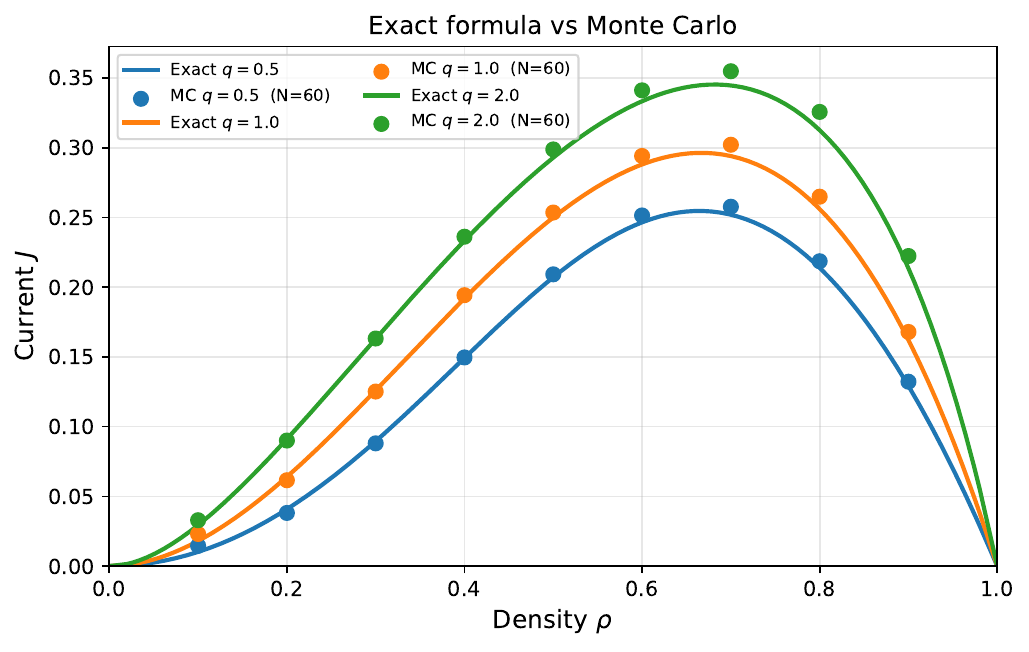}
\caption{Stationary current $J(\rho;\alpha,\beta)$ for $q\in\{0.5,1,2\}$ with $\alpha=1$. Solid curves: exact formula~\eqref{eq:J-formula}. Filled circles: Monte Carlo simulations ($N=60$, $5\times10^5$ updates).}
\label{fig:tepsimcomparison}
\end{figure}

\begin{figure}[t]
\centering
\includegraphics[width=0.48\textwidth]{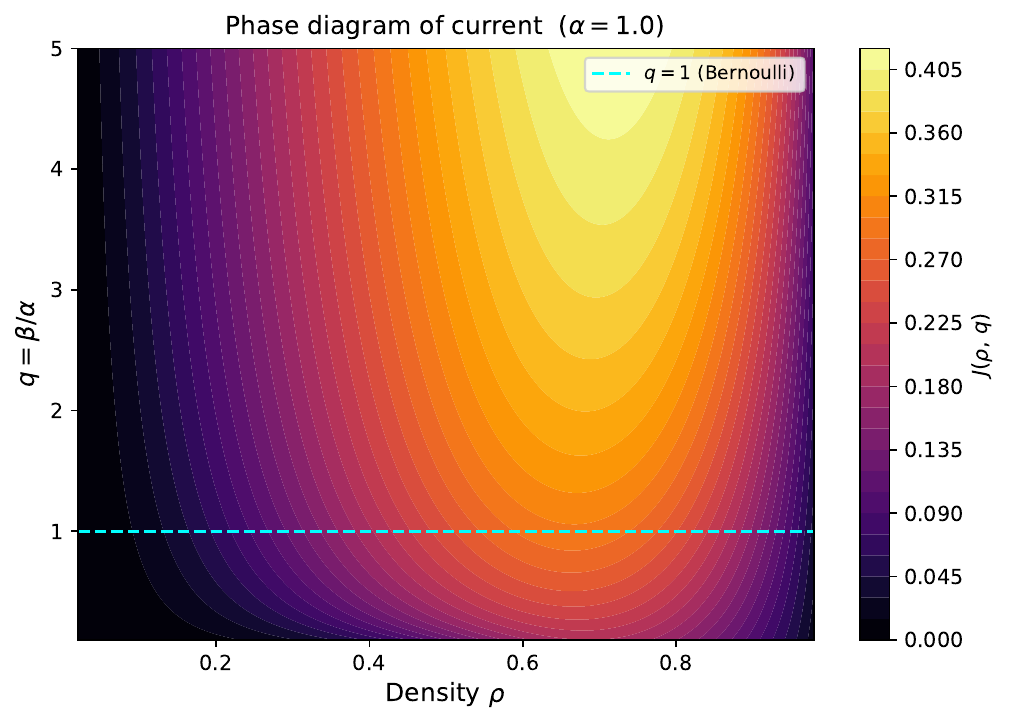}
\caption{Phase diagram of the stationary current $J(\rho,q)$ with $\alpha=1$. Colors encode the current magnitude from low (dark) to high (bright); the dashed line $q=1$ marks the Bernoulli locus.}
\label{fig:tepphasediagram}
\end{figure}

Figure~\ref{fig:tepsimcomparison} compares formula~\eqref{eq:J-formula} with Monte Carlo simulations over the full density range, showing excellent agreement at every value of $q$, and Figure~\ref{fig:tepphasediagram} displays the corresponding phase diagram in the $(\rho,q)$-plane. The current exhibits three structural features, all originating from the cooperative nature of the dynamics.

\medskip
\noindent\textbf{(i) Broken particle--hole symmetry.}
Because transport requires the simultaneous presence of a leader and a follower, the current $J(\rho;\alpha,\beta)$ is not invariant under $\rho\mapsto1-\rho$, and its maximum is attained at a density $\rho^*>1/2$. This asymmetry is evident in Figure~\ref{fig:tepsimcomparison}: the curves are skewed toward larger densities, with maxima located near $\rho\approx2/3$ rather than at half filling. Remarkably, the asymmetry persists even at the noninteracting point $q=1$, where~\eqref{eq:s-eq} yields $s=\rho/(1-\rho)$ and
\begin{equation}\label{eq:J-bernoulli}
J(\rho;\alpha,\alpha) = 2\alpha\rho^2(1-\rho),
\end{equation}
which is maximized at exactly $\rho=2/3$. Although the stationary state at $q=1$ reduces to the Bernoulli product measure, the kinetic constraint -- two adjacent particles are required to enable a transition -- is by itself sufficient to break particle--hole symmetry, in the absence of any stationary interaction.

\medskip
\noindent\textbf{(ii) Monotone enhancement and the strong-recovery limit.}
For every fixed $\rho\in(0,1)$, the current $J(\rho;\alpha,\beta)$ is a strictly increasing function of $q$: faster follower recovery stabilizes tandem pairs and accelerates transport. This monotonicity is clearly visible in Figure~\ref{fig:tepsimcomparison}, where the three curves corresponding to $q\in\{0.5,1,2\}$ are uniformly ordered, and in Figure~\ref{fig:tepphasediagram}, where the current magnitude increases along every vertical line.

Indeed, differentiating~\eqref{eq:s-eq} with respect to $q$ at fixed $\rho$ gives
\[
\frac{\partial s}{\partial q} = -\frac{(1-\rho)s^2}{2q(1-\rho)s+(1-2\rho)},
\]
and the closed form~\eqref{eq:s-formula} yields the bound $s(1-2\rho)\le\rho$, so the denominator equals $[2\rho-s(1-2\rho)]/s>0$, whence $\partial s/\partial q<0$. On the other hand, eliminating $q$ from~\eqref{eq:J-formula} by means of~\eqref{eq:s-eq} gives the equivalent expression
\[
J(\rho,q) = \frac{\rho}{s+1} - \frac{(1-2\rho)(1-\rho)s^2}{\rho(s+1)^2},
\]
whence
\[
\frac{dJ}{ds} = -\frac{\rho^2+s(5\rho^2-6\rho+2)}{\rho(s+1)^3}.
\]
The discriminant of $5\rho^2-6\rho+2$ is $-4<0$, so the numerator is strictly positive for all $\rho\in(0,1)$ and $s>0$, giving $dJ/ds<0$. Combining the two signs yields $\partial J/\partial q>0$.

In the strong-recovery limit $q\to\infty$, tail moves become effectively instantaneous: once a leader produces a $101$ configuration, the follower rapidly restores the tandem pair before further rearrangements occur. As a result, tandem clusters propagate coherently through the available vacancies, and the suppressing influence of exclusion on the current becomes much weaker than in the TASEP.

From~\eqref{eq:s-formula},
\[
s\sim\sqrt{\rho/(q(1-\rho))}\to0,
\]
and substitution into~\eqref{eq:J-formula} gives
\begin{equation}\label{eq:J-largeq}
J(\rho;\alpha,\beta) \longrightarrow \alpha\rho, \qquad q\to\infty.
\end{equation}
Thus, in the strong-recovery regime, cooperative tandem clusters move with an effective velocity approaching $\alpha$, leading to an almost linear current--density relation.

The brightening of the upper region in Figure~\ref{fig:tepphasediagram} as $q$ increases reflects this approach toward the highly cooperative transport regime $J(\rho)\approx\alpha\rho$.

\medskip
\noindent\textbf{(iii) Suppression in the slow-recovery regime.}
For $q<1$, tandem structures are unstable: the $101$ pattern produced by a head move is more likely to dissolve through a subsequent head move at a neighboring triple than to be repaired by a tail move. Cooperative propagation is therefore strongly suppressed and the current decreases below the Bernoulli value~\eqref{eq:J-bernoulli}, as illustrated by the lowest curve ($q=0.5$) in Figure~\ref{fig:tepsimcomparison} and by the dark band below the $q=1$ line in Figure~\ref{fig:tepphasediagram}.

% =========================================================
\subsection{Clustering structure and correlation length}
\label{sec:clustering}
% =========================================================

The transport features identified in the previous subsection are the macroscopic manifestation of an underlying spatial reorganization of the stationary state. This reorganization is most directly visible in the spacetime diagrams of Figure~\ref{fig:spacetime}, which show typical trajectories at density $\rho=1/2$ for three representative values of $q$. For $q=0.05$ (slow-recovery regime), particles trace sparse, essentially independent trajectories with no large-scale coherence, reflecting the suppressed current of regime~(iii). At $q=1$, the configurations are Bernoulli distributed and the spacetime pattern displays no characteristic structure: no visible clusters, no typical inter-particle spacing. For $q=20$ (strong-recovery regime), the dynamics organizes into broad diagonal bands of characteristic width, corresponding to tandem clusters that propagate coherently as nearly rigid moving structures separated by persistent empty gaps.

\begin{figure*}[t]
\centering
\includegraphics[width=\textwidth]{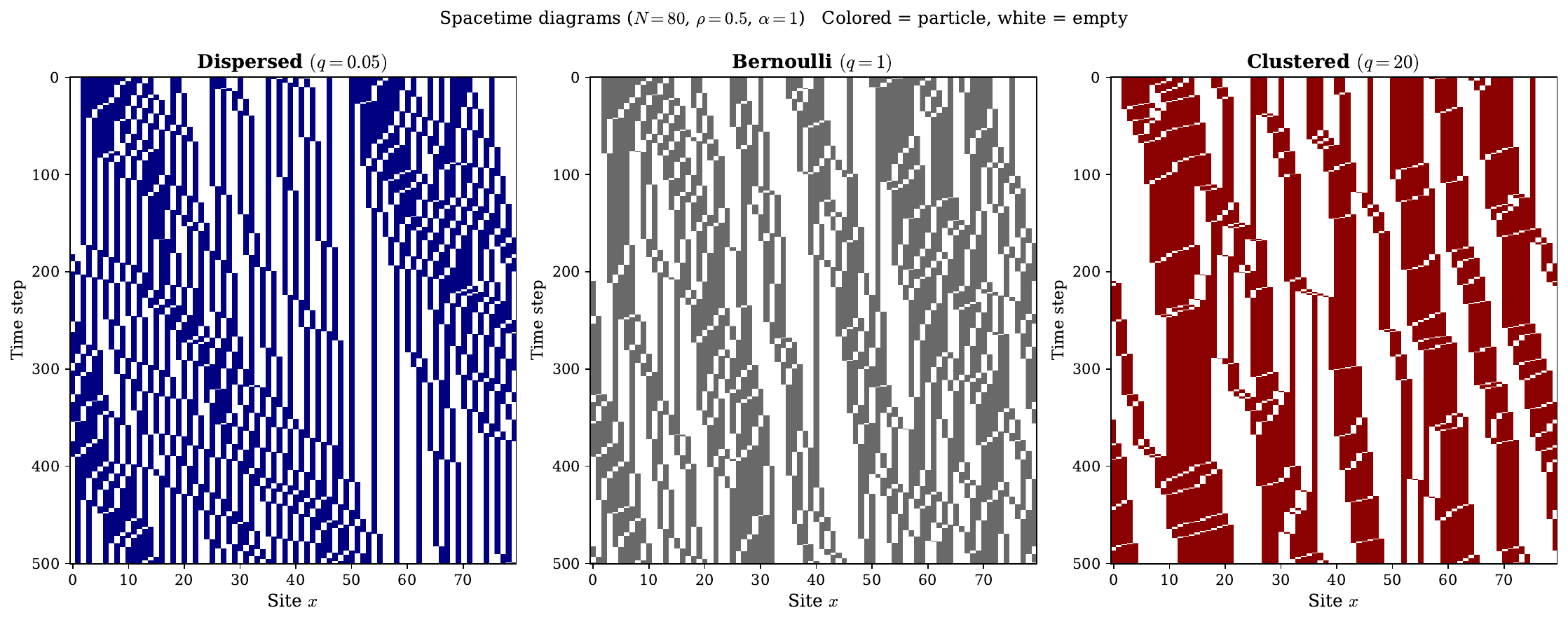}
\caption{Spacetime diagrams ($N=80$, $\rho=0.5$, $\alpha=1$). Left: dispersed regime ($q=0.05$). Center: Bernoulli regime ($q=1$). Right: clustered regime ($q=20$), where coherent moving tandem groups appear as broad diagonal bands.}
\label{fig:spacetime}
\end{figure*}

We now quantify these observations at two complementary scales: the nearest-neighbor pair correlation, which measures local clustering, and the asymptotic correlation length, which sets the typical extent of coherent tandem structures.

\medskip
\noindent\textbf{Nearest-neighbor clustering.}
A natural local indicator of clustering is the pair correlation ratio
\begin{equation}\label{eq:g}
g(\rho,q) := \frac{\E_\pi[\eta_0\eta_1]}{\rho^2},
\end{equation}
which compares the probability of adjacent occupation with the Bernoulli baseline $\rho^2$. The transfer-matrix computation (Appendix~\ref{sec:transfer},~\eqref{eq:E11-app}) gives
\begin{equation}\label{eq:E11}
\E_\pi[\eta_0\eta_1] = \frac{q\rho s}{1+qs},
\end{equation}
from which $g>1$, $g=1$, and $g<1$ correspond to clustered, uncorrelated, and dispersed regimes, respectively. The crossover between clustered and dispersed regimes occurs precisely at $q=1$ for every density.

The two extreme regimes $q\to\infty$ and $q\to0$ exhibit qualitatively different spatial organization. For $q\to\infty$, follower recovery becomes effectively instantaneous, so particles aggregate into coherent tandem clusters. Consequently,
\[
\E_\pi[\eta_0\eta_1]\to\rho,
\]
meaning that almost every occupied site has an occupied neighbor, and
\[
g(\rho,q)\to\frac1\rho,
\]
the largest value consistent with $\E_\pi[\eta_0\eta_1]\le\rho$ (this inequality comes from~\eqref{eq:E11}).

\begin{figure*}[t]
\centering
\includegraphics[width=\textwidth]{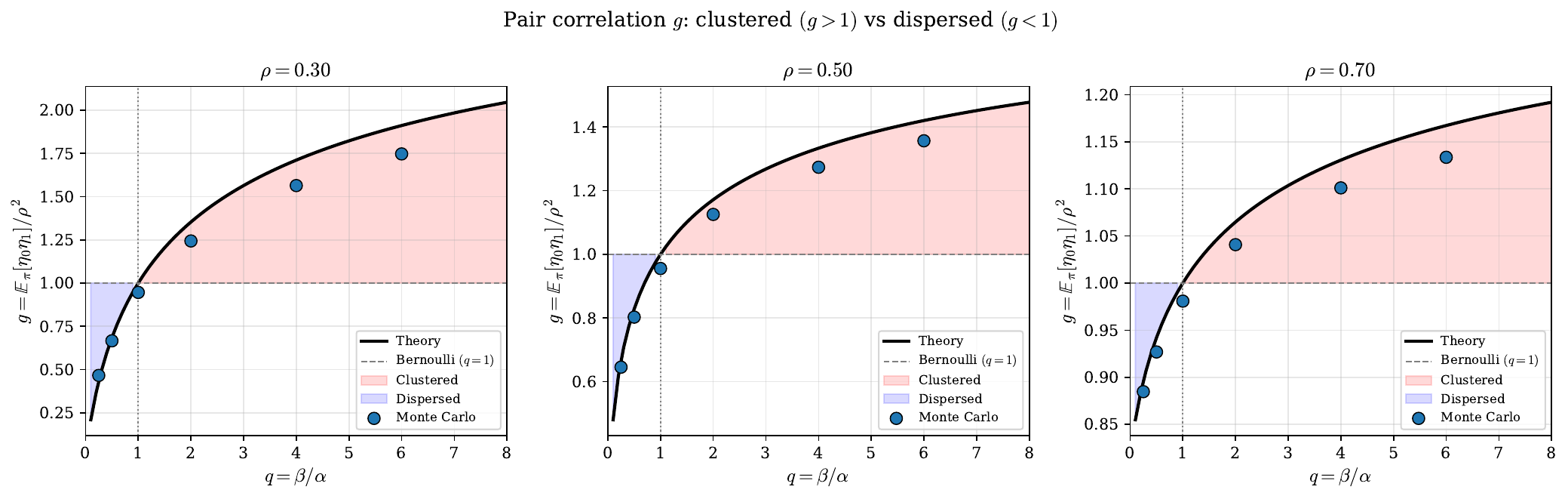}
\caption{Pair correlation ratio $g=\E_\pi[\eta_0\eta_1]/\rho^2$ as a function of $q=\beta/\alpha$ for several densities. Solid curves: exact formula~\eqref{eq:E11}. Filled circles: Monte Carlo simulations ($N=50$, $8000$ samples). The crossover between clustered ($g>1$) and dispersed ($g<1$) regimes occurs at $q=1$.}
\label{fig:clustering-g}
\end{figure*}

\begin{figure}[t]
\centering
\includegraphics[width=0.48\textwidth]{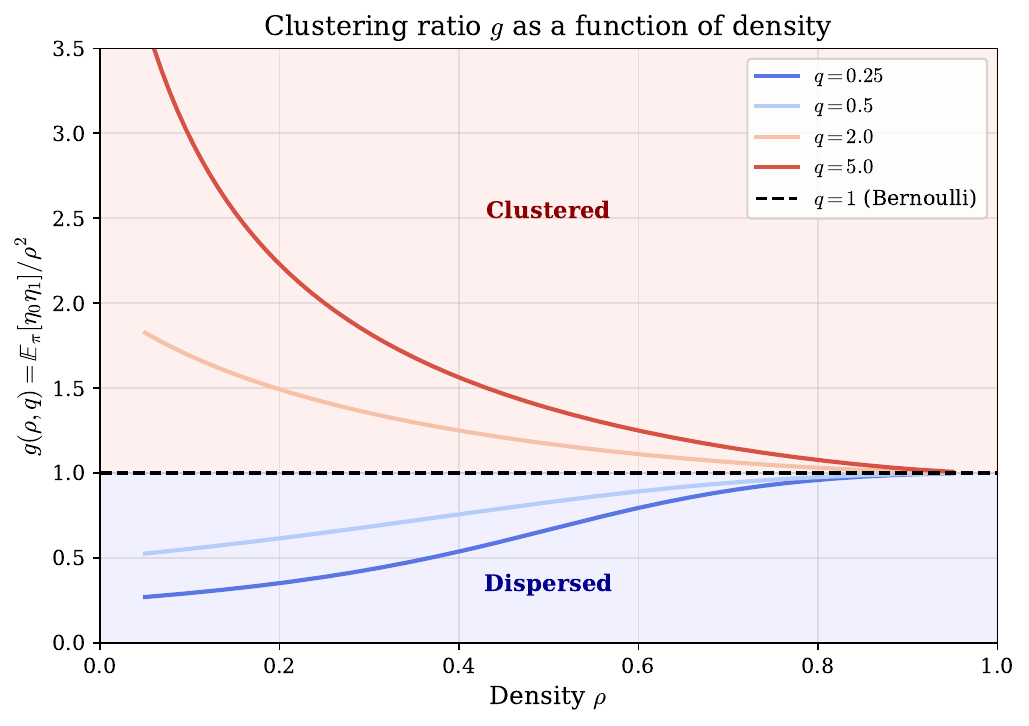}
\caption{Clustering ratio $g(\rho,q)$ as a function of density. For $q>1$, clustering is strongest at low density, where pairs are generated cooperatively rather than by packing constraints. All curves converge to $g=1$ as $\rho\to1$.}
\label{fig:clustering-rho}
\end{figure}

In contrast, when $q\to0$, tandem structures become unstable and adjacent occupied pairs are strongly suppressed. From~\eqref{eq:s-formula},
\[
qs \longrightarrow \frac{-(1-2\rho)+|1-2\rho|}{2(1-\rho)}, \qquad q\to0.
\]
Hence, for $\rho<1/2$,
\[
\E_\pi[\eta_0\eta_1]\to0, \qquad g(\rho,q)\to0,
\]
corresponding to asymptotically dispersed configurations. For $\rho>1/2$,
\[
\E_\pi[\eta_0\eta_1]\to2\rho-1,
\]
showing that complete dispersion becomes impossible at high density for geometric reasons. At the special density $\rho=1/2$, the situation is qualitatively different. Since
\[
\pi(\eta)\propto q^{H(\eta)}, \qquad H(\eta)=\sum_x\eta_x\eta_{x+1},
\]
the limit $q\to0$ concentrates the stationary measure on configurations minimizing $H(\eta)$, i.e.\ configurations without adjacent occupied pairs. At density $\rho=1/2$, the only such configurations are the two alternating states
\[
101010\cdots, \qquad 010101\cdots.
\]
The system therefore develops long-range staggered order, which is reflected in the divergence of the correlation length discussed below.

Figures~\ref{fig:clustering-g} and~\ref{fig:clustering-rho} illustrate these behaviors. The clustering ratio $g$ varies most strongly with $q$ at low density: in Figure~\ref{fig:clustering-rho} the family of curves spreads from $g\approx 0.25$ to $g\approx 3.5$ at $\rho=0.1$, but collapses toward a single value as $\rho$ increases. The reason is twofold. At low density, the Bernoulli baseline $\rho^2$ is small, so deviations from independence are strongly amplified in the ratio $g=\E_\pi[\eta_0\eta_1]/\rho^2$. At high density, geometric packing forces neighboring occupied pairs, and consequently $g(\rho,q)\to1$ as $\rho\to1$, independently of $q$.

\medskip
\noindent\textbf{Correlation length.}
Beyond nearest-neighbor correlations, the spatial structure of $\pi_{N,K}$ is governed by the spectral gap of the transfer matrix $T$~\eqref{eq:T-matrix}. Standard transfer-matrix arguments imply that connected correlations decay exponentially with distance. Indeed, since the transfer matrix has two eigenvalues $\lambda_+>|\lambda_-|$, the large-distance behavior is dominated by the spectral ratio $\lambda_-/\lambda_+$, yielding
\begin{equation*}\label{eq:cov-decay}
\Cov_\pi(\eta_0,\eta_r) = A(\rho,q) \left(\frac{\lambda_-}{\lambda_+}\right)^r + o\!\left(\left|\frac{\lambda_-}{\lambda_+}\right|^r\right), \qquad r\to\infty,
\end{equation*}
with correlation length
\begin{equation}\label{eq:xi-def}
\xi(\rho,q) = \left[-\log\left|\frac{\lambda_-}{\lambda_+}\right|\right]^{-1}.
\end{equation}
Using $\lambda_+=1+s$, $\lambda_+\lambda_-=z(q-1)$, and $z=s^2(1-\rho)/\rho$ from Appendix~\ref{sec:transfer}, we obtain the closed form
\begin{equation}\label{eq:xi-explicit}
\xi(\rho,q) = \left[ -\log\left| \frac{(q-1)s^2(1-\rho)}{\rho(1+s)^2} \right| \right]^{-1}.
\end{equation}

Two structural features are worth emphasizing. First, the sign of $\lambda_-$ distinguishes the two transport regimes: $\lambda_-/\lambda_+>0$ for $q>1$ produces monotone correlations, reflecting attractive clustering between neighboring particles, while $\lambda_-/\lambda_+<0$ for $q<1$ produces alternating correlations characteristic of an anti-clustered dispersed regime. Second, the point $q=1$ is singular: $\lambda_-=0$ and $\xi=0$, consistent with the Bernoulli product structure of $\pi$ at this point.

The asymptotic behavior of $\xi$ reveals two distinct critical regimes. In the strong-recovery limit $q\to\infty$, one finds $s\sim\sqrt{\rho/(q(1-\rho))}$ and $|\lambda_-/\lambda_+|=1-2s+o(s)$, hence
\begin{equation}\label{eq:xi-large-q}
\xi(\rho,q) \sim \frac12\sqrt{\frac{q(1-\rho)}{\rho}}, \qquad q\to\infty.
\end{equation}
The characteristic size of coherent tandem clusters therefore grows as $q^{1/2}$, which is precisely the scaling of the band widths observed in the right panel of Figure~\ref{fig:spacetime}.

In the opposite limit $q\to0$, the correlation length remains finite away from $\rho=1/2$. At the symmetric density $\rho=1/2$, however, one has exactly $s=q^{-1/2}$, which implies
\[
\left|\frac{\lambda_-}{\lambda_+}\right| = 1-2\sqrt q+o(\sqrt q).
\]
Consequently,
\begin{equation}\label{eq:xi-small-q-half}
\xi(1/2,q) \sim \frac{1}{2\sqrt q}, \qquad q\to0.
\end{equation}

% =========================================================
\section{Conclusion}
\label{sec:discussion}
% =========================================================

We introduced the tandem exclusion process (TEP), a cooperative driven lattice gas motivated by tandem recruitment in ants. Unlike classical exclusion processes, transport in the TEP is carried by dynamically propagating leader-follower pairs rather than by independent particles.

Despite the irreversible nature of the dynamics, the model admits an exact nearest-neighbor Gibbs stationary measure together with explicit formulas for the stationary current and spatial correlations. The cooperative dynamics breaks particle--hole symmetry, shifts the maximal-current regime toward densities larger than one half, and produces strong clustering in the stationary state.

In the strong-recovery regime $q\gg1$, follower recovery becomes effectively instantaneous and tandem clusters propagate coherently through the available vacancies. As a consequence, the stationary current approaches the asymptotic behavior
\[
J(\rho)\sim \alpha\rho,
\]
showing that cooperative coordination can strongly reduce the congestion effects typically associated with exclusion-driven transport.

From a biological perspective, the TEP suggests that direct local coordination alone can substantially enhance collective transport, even without long-range communication mechanisms such as pheromone signaling. The model therefore provides a minimal theoretical framework for understanding how cooperative leader--follower interactions may improve transport efficiency in tandem-running systems.

% =========================================================
\appendix

% =========================================================
\section{Grand-canonical ensemble}
\label{sec:grandcanonical}
% =========================================================

To analyze the thermodynamic limit, it is convenient to introduce the grand-canonical Gibbs measure. For a fugacity $z>0$, define
\begin{equation}\label{eq:GC}
\pi_z(\eta) = \frac{1}{\Xi_N(z,q)} z^{\sum_x\eta_x} q^{H(\eta)}, \qquad \eta\in\{0,1\}^{\TN},
\end{equation}
where
\[
H(\eta)=\sum_{x\in\TN}\eta_x\eta_{x+1},
\]
and
\[
\Xi_N(z,q) = \sum_{\eta\in\{0,1\}^{\TN}} z^{\sum_x\eta_x}q^{H(\eta)}
\]
is the grand partition function.

Conditioning $\pi_z$ on the event
\[
\sum_x\eta_x=K
\]
recovers the canonical stationary measure $\pi_{N,K}$. In the thermodynamic limit
\[
N\to\infty, \qquad K/N\to\rho,
\]
local observables under $\pi_{N,K}$ coincide with those under $\pi_z$, provided the fugacity $z$ is chosen so that
\[
\E_{\pi_z}[\eta_0]=\rho.
\]

The Gibbs weight factorizes over nearest-neighbor bonds:
\[
z^{\sum_x\eta_x}q^{H(\eta)} = \prod_{x\in\TN} z^{\eta_{x+1}} q^{\eta_x\eta_{x+1}}.
\]
This factorization is the starting point of the transfer-matrix representation developed below.

% =========================================================
\section{Transfer-matrix method}
\label{sec:transfer}
% =========================================================

The grand-canonical partition function and stationary correlation functions can be computed explicitly by transfer-matrix methods~\cite{Baxter1982}. Using
\[
H(\eta)=\sum_x\eta_x\eta_{x+1},
\]
the Gibbs weight factorizes over nearest-neighbor bonds as
\[
z^{\sum_x\eta_x}q^{H(\eta)} = \prod_{x\in\TN} T(\eta_x,\eta_{x+1}),
\]
where
\begin{equation}\label{eq:T-matrix}
T = \begin{pmatrix} 1 & z\\ 1 & qz \end{pmatrix}, \quad T(a,b)=z^bq^{ab}, \quad a,b\in\{0,1\}.
\end{equation}
Hence the grand-canonical partition function is
\[
\Xi_N(z,q)=\operatorname{tr}(T^N).
\]

The characteristic polynomial of $T$ is
\begin{equation}\label{eq:chrac_eq_T}
\lambda^2-(1+qz)\lambda+z(q-1)=0,
\end{equation}
with eigenvalues
\begin{equation}\label{eq:eigenvalues}
\lambda_\pm = \frac{(1+qz)\pm\sqrt{(qz-1)^2+4z}}{2}.
\end{equation}
The corresponding right and left eigenvectors associated with the dominant eigenvalue $\lambda_+$ are
\begin{equation}\label{eq:eigenvectors}
v_R = \begin{pmatrix} z\\ \lambda_+-1 \end{pmatrix}, \qquad v_L = \bigl( 1,\lambda_+-1 \bigr),
\end{equation}
with
\[
v_L\!\cdot v_R = z+(\lambda_+-1)^2.
\]

\subsection{Local observables}

Let $f=f(\eta_1,\dots,\eta_r)$ be a local observable depending on $r$ consecutive sites. Expectation with respect to the grand-canonical measure $\pi_z$~\eqref{eq:GC} is
\[
\E_{\pi_z}[f] = \frac1{\Xi_N(z,q)} \sum_{\eta} f(\eta_1,\dots,\eta_r) \prod_{x=1}^N T(\eta_x,\eta_{x+1}).
\]

To isolate the local contribution of $f$, we introduce the insertion matrix
\begin{equation}\label{eq:fhat-def}
\widehat f(a,c) = \sum_{a_2,\dots,a_{r-1}\in\{0,1\}} f(a,a_2,\dots,a_{r-1},c) \prod_{j=1}^{r-1} T(a_j,a_{j+1}),
\end{equation}
where $a_1=a$ and $a_r=c$. For observables depending on a single occupation variable, $f=f(\eta_1)$, we use the convention
\begin{equation}\label{eq:fhat-single}
\widehat f(a,b) = f(b)\,T(a,b).
\end{equation}

Summing over the remaining intermediate variables produces the matrix power $T^{N-r+1}$, and therefore
\begin{equation}\label{eq:observable-trace}
\E_{\pi_z}[f] = \frac{\operatorname{tr}\bigl(\widehat f\,T^{N-r+1}\bigr)}{\operatorname{tr}(T^N)}.
\end{equation}

Since $\lambda_+>|\lambda_-|$, the large-$N$ behavior of the transfer matrix is dominated by the Perron eigenvalue $\lambda_+$:
\[
T^{N-r+1} = \lambda_+^{N-r+1} \frac{v_Rv_L}{v_L\!\cdot v_R} + O(|\lambda_-|^{N-r+1}), \qquad N\to\infty.
\]
Moreover,
\[
\operatorname{tr}(T^N) = \lambda_+^N+O(|\lambda_-|^N).
\]
Substituting these asymptotic expansions into~\eqref{eq:observable-trace}, the leading powers of $\lambda_+$ cancel, yielding
\begin{equation}\label{eq:TM-formula}
\E_{\pi_z}[f] = \frac{v_L\widehat f\,v_R}{\lambda_+^{r-1}(v_L\!\cdot v_R)}.
\end{equation}
In particular, for observables depending on two consecutive sites ($r=2$),
\begin{equation}\label{eq:TM-r2}
\E_{\pi_z}[f] = \frac{v_L\widehat f\,v_R}{\lambda_+(v_L\!\cdot v_R)}.
\end{equation}

\subsection{Density and thermodynamic parametrization}

To compute the density, consider the observable
$
f(\eta_1)=\eta_1.
$
The corresponding insertion matrix is
\[
\widehat f = \begin{pmatrix} 0 & z\\ 0 & qz \end{pmatrix}.
\]
Using~\eqref{eq:eigenvectors} and the eigenvalue relation
\[
z+qz(\lambda_+-1) = \lambda_+(\lambda_+-1),
\]
equation~\eqref{eq:TM-formula} gives
\[
\rho = \E_{\pi_z}[\eta_1] = \frac{(\lambda_+-1)^2}{z+(\lambda_+-1)^2}.
\]

Introduce the parameter
\begin{equation}\label{eq:s-def}
s:=\lambda_+-1>0.
\end{equation}
Then
\begin{equation}\label{eq:rho-s}
\rho = \frac{s^2}{z+s^2}, \qquad z = \frac{s^2(1-\rho)}{\rho}.
\end{equation}

Substituting $\lambda_+=s+1$ into the characteristic equation~\eqref{eq:chrac_eq_T} of $T$ yields
\begin{equation}\label{eq:s-eq-z}
s^2=z+(qz-1)s.
\end{equation}
Combining this with~\eqref{eq:rho-s}, we obtain
\begin{equation}\label{eq:s-eq-app}
q(1-\rho)s^2+(1-2\rho)s-\rho=0,
\end{equation}
which uniquely determines $s=s(\rho,q)$.

At the special point $q=1$,
\[
\det T=z(q-1)=0,
\]
so the transfer matrix becomes rank one. In this case the stationary state reduces to the Bernoulli product measure in the thermodynamic limit.

\subsection{Correlation functions and current}

We now compute the correlation functions entering the stationary current formula. For the head-move pattern
\[
f(\eta_1,\eta_2,\eta_3) = \eta_1\eta_2(1-\eta_3),
\]
corresponding to the local configuration $110$, the insertion matrix is
\[
\widehat f = \begin{pmatrix} 0 & 0\\ qz & 0 \end{pmatrix}.
\]
Applying~\eqref{eq:TM-formula} with $r=3$ gives
\begin{equation}\label{eq:E-head-app}
\E_{\pi_z}[\eta_1\eta_2(1-\eta_3)] = \frac{qz^2\rho}{\lambda_+^2s}.
\end{equation}

Similarly, for the tail-move pattern
\[
f(\eta_1,\eta_2,\eta_3) = \eta_1(1-\eta_2)\eta_3,
\]
corresponding to the configuration $101$, one obtains
\[
\widehat f = \begin{pmatrix} 0 & 0\\ 0 & z \end{pmatrix},
\]
and therefore
\begin{equation}\label{eq:E-tail-app}
\E_{\pi_z}[\eta_1(1-\eta_2)\eta_3] = \frac{z\rho}{\lambda_+^2}.
\end{equation}

Substituting~\eqref{eq:E-head-app} and~\eqref{eq:E-tail-app} into~\eqref{eq:J-as-3pt},
\[
J = \alpha\frac{qz^2\rho}{\lambda_+^2s} + \beta\frac{z\rho}{\lambda_+^2}.
\]
Using $\beta=\alpha q$ and $\lambda_+=s+1$, we obtain
\[
J = \frac{\alpha qz\rho(z+s)}{(s+1)^2s}.
\]

From~\eqref{eq:s-eq-z},
\[
s^2+s=z(1+qs),
\]
which gives
\begin{equation}\label{eq:z-explicit}
z = \frac{s(s+1)}{1+qs}, \qquad z+s = \frac{s(s+2+qs)}{1+qs}.
\end{equation}
Substituting these identities into the previous expression yields
\[
J = \frac{\alpha q\rho s(s+2+qs)}{(s+1)(1+qs)^2},
\]
which proves~\eqref{eq:J-formula}.

Finally, for the nearest-neighbor correlation
\[
f(\eta_1,\eta_2)=\eta_1\eta_2,
\]
the insertion matrix is
\[
\widehat f = \begin{pmatrix} 0 & 0\\ 0 & qz \end{pmatrix}.
\]
Applying~\eqref{eq:TM-r2} gives
\[
\E_{\pi_z}[\eta_1\eta_2] = \frac{qzs^2}{\lambda_+(z+s^2)} = \frac{qz\rho}{\lambda_+}.
\]
Using~\eqref{eq:z-explicit} and $\lambda_+=s+1$,
\begin{equation}\label{eq:E11-app}
\E_{\pi_z}[\eta_1\eta_2] = \frac{q\rho s}{1+qs}.
\end{equation}

% =========================================================

\end{document}